\algrenewcommand\algorithmicrequire{\textbf{Inputs:}}
\algrenewcommand\algorithmicensure{\textbf{Outputs:}}
\renewcommand{\ALG@beginalgorithmic}{\small}
\newtheorem{theorem}{Theorem}
\newtheorem{proposition}{Proposition}
\title{Neural Tangent Kernels for Complex Genetic Risk Prediction:\\
Bridging Deep Learning and Kernel Methods in Genomics}
\author{Heng Ge \quad Qing Lu}
\affil{Department of Biostatistics, University of Florida \\
\texttt{heng.ge@ufl.edu, lucienq@ufl.edu}}
\begin{document}
\date{}
\maketitle

\begin{abstract}
 Given the complexity of genetic risk prediction, there is a critical need for the development of novel methodologies that can effectively capture intricate genotype–phenotype relationships (e.g., nonlinear) while remaining statistically interpretable and computationally tractable. We develop a Neural Tangent Kernel (NTK) framework to integrate kernel methods into deep neural networks for genetic risk prediction analysis. We consider two approaches: \emph{NTK-LMM}, which embeds the empirical NTK in a linear mixed model with variance components estimated via minimum quadratic unbiased estimator (MINQUE), and \emph{NTK-KRR}, which performs kernel ridge regression with cross-validated regularization. Through simulation studies, we show that NTK-based models outperform the traditional neural network models and linear mixed models. By applying NTK to endophenotypes (e.g., hippocampal volume) and AD-related genes (e.g., APOE) from Alzheimer's Disease Neuroimaging Initiative (ADNI), we found that NTK achieved higher accuracy than existing methods for hippocampal volume and entorhinal cortex thickness. In addition to its accuracy performance, NTK has favorable optimization properties (i.e., having a closed-form or convex training) and generates interpretable results due to its connection to variance components and heritability. Overall, our results indicate that by integrating the strengths of both deep neural networks and kernel methods, NTK offers competitive performance for genetic risk prediction analysis while having the advantages of interpretability and computational efficiency.
\end{abstract}

\section{Introduction}
Accurate genetic risk prediction is central to precision medicine \citep{collins2015new,ashley2015precision}, yet for most complex diseases currently discovered loci explain only a modest fraction of heritability \citep{chatterjee2013projecting}. Because complex traits likely arise from many variants acting through intertwined biological pathways, modeling sets of variants jointly—potentially with interactions—can improve prediction \citep{neale2004future,chatterjee2006powerful,wu2011rare}. In practice, however, genotype data present severe statistical and computational challenges: the parameter space scales rapidly, genotype–phenotype relationships are often non-linear, the noise-to-signal ratio is high, and computational costs can be prohibitive \citep{kang2010variance,lippert2011fast,speed2014multiblup,weissbrod2016multikernel,wen2020multikernel}.

Linear mixed models (LMMs) have become a workhorse for genetic prediction at scale. Rather than estimating per-variant fixed effects, LMMs encode the assumption that genetic similarity implies phenotypic similarity via a kernel (or genetic relationship) matrix, yielding genomic BLUP (gBLUP) and its extensions \citep{yang2010common}. To increase flexibility, multi-component and multi-kernel variants partition the genome and assign separate variance components or kernels to different groups, thereby capturing heterogeneity across annotations or non-linear effects through pre-specified kernels \citep{speed2014multiblup,weissbrod2016multikernel,wen2020multikernel}. While effective, these models can be computationally demanding when many components are included. Moment-based methods provide computationally efficient alternatives to traditional likelihood-based approaches. While restricted maximum likelihood (REML) and maximum likelihood estimation (MLE) remain standard, they become computationally prohibitive at scale. Minimum Norm Quadratic Unbiased Estimation (MINQUE) and related General Method of Moments (GMM) approaches offer faster computation without sacrificing accuracy, making them particularly suitable for large-scale genetic studies \citep{rao1970estimation,rao1971estimation,rao1972estimation,shen2021kernel,wang2022penalized}. Nevertheless, standard LMMs remain limited when non-linear and hierarchical effects are prominent, and multi-kernel designs trade statistical flexibility against computational burden.

Deep neural networks provide powerful non-linear function classes \citep{Goodfellow-et-al-2016} but can be difficult to train reliably in high dimensions and are less directly connected to heritability-oriented inference \citep{van2020bayesian,singh2023fsnet}. The Neural Tangent Kernel (NTK) bridges these paradigms: in the infinite-width (lazy training) limit, gradient descent on a neural network corresponds to kernel regression under a data-dependent kernel determined by the architecture \citep{jacot2018neural}. This connection preserves the architectural inductive structure of deep networks while recovering convex optimization and closed-form predictors in the kernel regime. For genetic prediction, NTK thus offers a principled way to import expressive, hierarchical representations into statistically grounded kernel and mixed-model frameworks—potentially improving accuracy on traits with non-additive or saturating effects, without sacrificing interpretability or stability.

Building on the above motivation, we make three contributions: (i) we introduce two NTK-based predictors tailored to genomics: \emph{NTK-LMM}, which embeds the empirical NTK into an LMM with variance components estimated via MINQUE \citep{rao1970estimation,rao1971estimation,rao1972estimation}, and \emph{NTK-KRR}, which applies kernel ridge regression \citep{shawe2004kernel} with cross-validated regularization; (ii) we test our methods using simulations based on real genetic data patterns. Our testing framework examines, under different genetic dimension, how well the methods handle different types of relationships between genes and traits—from simple linear effects to complex nonlinear scenarios. This comprehensive testing ensures the methods work across diverse biological situations; and (iii) we apply our approach to brain imaging data from the Alzheimer's Disease Neuroimaging Initiative \citep{saykin2010alzheimer}, focusing on a well-studied genetic region (APOE–TOMM40–APOC1). This real-world test shows when our NTK method outperforms traditional approaches and when all methods perform similarly. Together, these results demonstrate that NTK offers a practical way to harness the power of deep learning for genetic prediction while keeping the interpretability and reliability that researchers need. An anonymized implementation of our method is provided in the supplementary material.

\section{Methodologies}
\subsection{Neural Tangent Kernel}

The Neural Tangent Kernel provides a theoretical framework connecting the behavior of infinitely wide neural networks to kernel methods \citep{jacot2018neural}. Consider a fully connected neural network $f(\mathbf{x}; \boldsymbol{\theta})$ with parameters $\boldsymbol{\theta}$ initialized from a Gaussian distribution, where $\mathbf{x} \in \mathbb{R}^d$ represents the input vector (e.g., genetic SNPs). As the width of hidden layers approaches infinity, the network's training dynamics under gradient descent can be characterized by a fixed kernel function.

The NTK is defined as:
\begin{equation}
\Theta(\mathbf{x}, \mathbf{x}') = \left\langle \nabla_{\boldsymbol{\theta}} f(\mathbf{x}; \boldsymbol{\theta}), \nabla_{\boldsymbol{\theta}} f(\mathbf{x}'; \boldsymbol{\theta}) \right\rangle
\end{equation}
where the inner product is taken over all network parameters. This kernel quantifies the similarity between two inputs based on how similarly they influence the network's parameters during training.

Like classical kernels (Gaussian, polynomial), the NTK defines a similarity measure between data points, but one that emerges from the neural network's architecture rather than being explicitly designed. Just as a Gaussian kernel $k(\mathbf{x}, \mathbf{x}') = \exp(-\|\mathbf{x} - \mathbf{x}'\|^2/2\sigma^2)$ measures similarity through Euclidean distance, the NTK measures similarity through how similarly the network's parameters affect predictions at different input points. Crucially, the NTK belongs to the class of reproducing kernel Hilbert space (RKHS) kernels, satisfying positive definiteness and enabling the representer theorem \citep{scholkopf2001generalized}. This means any function learned by an infinitely wide neural network through gradient descent can be expressed as:
\begin{equation}
f(\mathbf{x}) = \sum_{i=1}^{n} \alpha_i \Theta(\mathbf{x}, \mathbf{x}_i)
\end{equation}
where $\{\mathbf{x}_i\}_{i=1}^n$ are training points and $\alpha_i$ are learned coefficients.

For a network with $L$ layers and width $m$, the NTK can be computed recursively. Let $\Sigma^{(0)}(\mathbf{x}, \mathbf{x}') = \mathbf{x}^T\mathbf{x}'/d$ where $d$ is the input dimension. For layers $\ell = 1, \ldots, L$:
\begin{equation}
\begin{aligned}
\Lambda^{(\ell)}(\mathbf{x}, \mathbf{x}') &= \begin{pmatrix} 
\Sigma^{(\ell-1)}(\mathbf{x}, \mathbf{x}) & \Sigma^{(\ell-1)}(\mathbf{x}, \mathbf{x}') \\ 
\Sigma^{(\ell-1)}(\mathbf{x}', \mathbf{x}) & \Sigma^{(\ell-1)}(\mathbf{x}', \mathbf{x}') 
\end{pmatrix} \\
\Sigma^{(\ell)}(\mathbf{x}, \mathbf{x}') &= c_\sigma \mathbb{E}_{(u,v) \sim \mathcal{N}(0, \Lambda^{(\ell)})}[\sigma(u)\sigma(v)]
\end{aligned}
\end{equation}
where $\sigma$ is the activation function and $c_\sigma$ is a normalization constant (e.g., $c_\sigma = 2$ for ReLU activation). The matrix $\Lambda^{(\ell)}$ captures the covariance structure at layer $\ell$, and the expectation computes the kernel value through the activation function.

As network width $m \to \infty$, the NTK converges to a deterministic kernel $\Theta_\infty(\mathbf{x}, \mathbf{x}')$ that depends only on architecture and inputs, not on random initialization \citep{jacot2018neural}. This convergence follows from the law of large numbers, with fluctuations vanishing as $\mathcal{O}(1/\sqrt{m})$. In this regime, the kernel remains approximately constant during training \citep{chizat2018global}.

In the deterministic NTK regime, kernel methods can fully represent infinitely wide neural networks \citep{jacot2018neural}. The function learned by gradient descent becomes exactly equivalent to kernel regression: $f(\mathbf{x}, t) = f(\mathbf{x}, 0) + \Theta(\mathbf{x}, \mathbf{X})\Theta(\mathbf{X}, \mathbf{X})^{-1}(\mathbf{y} - f(\mathbf{X}, 0))(1 - e^{-\eta t})$, where $\mathbf{X}$ is training data, $\eta$ is learning rate and $t$ is the scaled training iterations \citep{arora2019exactcomputationinfinitelywide, Lee_2020}. The NTK inherits neural networks' universal approximation properties \citep{hornik1989multilayer} while operating in the infinite-dimensional tangent space at initialization. Kernel regression with NTK offers convex optimization with unique global optima and closed-form solutions: $\mathbf{f} = \Theta(\mathbf{X}_{test}, \mathbf{X}_{train})(\Theta(\mathbf{X}_{train}, \mathbf{X}_{train}) + \lambda \mathbf{I})^{-1}\mathbf{y}$. 

The NTK captures non-linear relationships through its hierarchical construction—deeper networks yield more expressive kernels capable of modeling complex patterns. Unlike hand-crafted kernels that encode specific assumptions (e.g., smoothness for Gaussian kernels, polynomial interactions for polynomial kernels), the NTK automatically learns appropriate feature representations through its architectural inductive bias. This makes it particularly suitable for genetic data where the optimal similarity measure between genotype profiles is unknown a priori. The NTK preserves deep learning's inductive biases—depth creates compositional learning through recursive structure, while architecture choices are encoded in the kernel.

This formulation enables us to leverage the expressiveness of deep neural networks while maintaining the theoretical guarantees and computational tractability of kernel methods—crucial for genetic risk prediction where both accuracy and reliability are essential.

\subsection{Linear Mixed Models and Kernel Ridge Regression}

Linear Mixed Models and Kernel Ridge Regression represent two fundamental approaches for prediction with structured data \citep{shawe2004kernel,yang2010common}. Consider a response variable $\mathbf{y} \in \mathbb{R}^n$ and a kernel matrix $\mathbf{K} \in \mathbb{R}^{n \times n}$ computed from features $\mathbf{X} \in \mathbb{R}^{n \times p}$, where in genomics applications, $\mathbf{X}$ typically represents genotype data with $n$ individuals and $p$ genetic variants.

In genetic risk prediction, LMMs have emerged as the dominant framework for polygenic score estimation and genome-wide association studies \citep{yang2010common,kang2010variance,yang2011gcta}. The key insight is modeling genetic similarity through kernel functions that capture the aggregate effects of many genetic variants. The most widely used is the linear kernel, also called the genomic relationship matrix (GRM):
\begin{equation}
\mathbf{K}_{ij} = \frac{1}{p}\sum_{k=1}^{p} \frac{(X_{ik} - \mu_k)(X_{jk} - \mu_k)}{\sigma_k^2}
\end{equation}
where $X_{ik}$ is the genotype of individual $i$ at variant $k$, and $\mu_k$, $\sigma_k$ are the mean and standard deviation of variant $k$. This kernel measures genetic similarity as the correlation of standardized genotypes across all variants, capturing the intuition that individuals with similar genetic profiles should have similar phenotypic outcomes. The normalization by $p$ ensures the kernel scales appropriately with the number of variants, while the standardization accounts for different allele frequencies across the genome.

The LMM formulation models the response as:
\begin{equation}
\mathbf{y} = \mathbf{Z}\boldsymbol{\beta} + \mathbf{u} + \boldsymbol{\epsilon}
\end{equation}
where $\mathbf{Z} \in \mathbb{R}^{n \times q}$ is the fixed effects design matrix containing covariates such as age, sex, population principal components, and other demographic variables that must be adjusted for in genetic analyses. The random genetic effects $\mathbf{u} \sim \mathcal{N}(0, \sigma_g^2\mathbf{K})$ model the aggregate contribution of genetic variants, while $\boldsymbol{\epsilon} \sim \mathcal{N}(0, \sigma_e^2\mathbf{I})$ represents residual error. Here, $\sigma_g^2$ is the genetic variance (heritability component) and $\sigma_e^2$ is the environmental variance. The kernel $\mathbf{K}$ encodes our assumption about genetic similarity—the product kernel assumes additive effects across variants \citep{yang2010common}, while polynomial kernels can capture epistatic interactions \citep{akdemir2015locally}.

The variance components $\sigma_g^2$ and $\sigma_e^2$ are unknown parameters that must be estimated from the data before computing predictions. While Restricted Maximum Likelihood is widely used for variance component estimation \citep{patterson1971recovery}, Minimum Norm Quadratic Unbiased Estimation offers superior computational efficiency, particularly for large-scale genetic data \citep{rao1970estimation,rao1971estimation,rao1972estimation}. MINQUE estimates variance components by solving:
\begin{equation}
\hat{\boldsymbol{\theta}} = \mathbf{C}^{-1}\mathbf{u}
\end{equation}
where $\boldsymbol{\theta} = [\sigma_e^2, \sigma_g^2]^T$ are the variance components, $\mathbf{u}_i = \mathbf{y}^T\mathbf{P}\mathbf{V}_i\mathbf{P}\mathbf{y}$ with $\mathbf{V}_0 = \mathbf{I}$ and $\mathbf{V}_1 = \mathbf{K}$, and $\mathbf{C}_{ij} = \text{tr}(\mathbf{P}\mathbf{V}_i\mathbf{P}\mathbf{V}_j)$. The matrix $\mathbf{P} = \mathbf{V}^{-1} - \mathbf{V}^{-1}\mathbf{Z}(\mathbf{Z}^T\mathbf{V}^{-1}\mathbf{Z})^{-1}\mathbf{Z}^T\mathbf{V}^{-1}$ projects out fixed effects, where $\mathbf{V} = \sigma_g^2\mathbf{K} + \sigma_e^2\mathbf{I}$ is initialized using prior weights (MINQUE0) or iteratively updated (MINQUE1). The key computational advantage is that MINQUE requires only solving linear systems rather than iterative optimization, making it tractable for high-dimensional genetic data.

Once variance components are estimated, the Best Linear Unbiased Prediction (BLUP) for the random effects is:
\begin{equation}
\hat{\mathbf{u}} = \hat{\sigma}_g^2\mathbf{K}(\hat{\sigma}_g^2\mathbf{K} + \hat{\sigma}_e^2\mathbf{I})^{-1}(\mathbf{y} - \mathbf{Z}\hat{\boldsymbol{\beta}})
\end{equation}

In contrast to LMM's variance component estimation, KRR offers a more straightforward computational approach \citep{shawe2004kernel}. The KRR solution minimizes the regularized loss:
\begin{equation}
\hat{\boldsymbol{\alpha}} = \arg\min_{\boldsymbol{\alpha}} \|\mathbf{y} - \mathbf{K}\boldsymbol{\alpha}\|^2 + \lambda \boldsymbol{\alpha}^T\mathbf{K}\boldsymbol{\alpha}
\end{equation}
yielding the closed-form solution $\hat{\boldsymbol{\alpha}} = (\mathbf{K} + \lambda\mathbf{I})^{-1}\mathbf{y}$. The regularization parameter $\lambda$ is typically selected through cross-validation, testing a grid of values (e.g., $\lambda \in \{10^{-3}, 10^{-2}, \ldots, 10^{2}\}$) and choosing the one minimizing prediction error on held-out data. This cross-validation approach is computationally simple—requiring only matrix inversions for each $\lambda$ candidate—and avoids the complex variance component estimation procedures required by LMMs. For prediction on new samples with kernel values $\mathbf{k}_*$, we simply compute $\hat{y}_* = \mathbf{k}_*^T\hat{\boldsymbol{\alpha}}$, making KRR particularly attractive when prediction accuracy is the primary goal rather than heritability estimation.

The relationship between KRR and LMM reveals complementary strengths for genetic prediction. When $\lambda = \sigma_e^2/\sigma_g^2$, KRR and LMM predictions become mathematically equivalent, demonstrating that the regularization parameter $\lambda$ directly corresponds to the noise-to-signal ratio estimated through variance components. This equivalence highlights a fundamental trade-off between computational simplicity and statistical richness. KRR offers a more straightforward approach through cross-validation for selecting $\lambda$, directly optimizing predictive performance without requiring variance component estimation. This computational efficiency and direct focus on prediction accuracy often makes KRR the preferred choice when the primary goal is maximizing predictive power. However, the LMM framework provides crucial advantages for comprehensive genetic analysis. LMMs naturally accommodate both fixed effects (population structure, age, sex, batch effects) and random genetic effects within a unified model: $\mathbf{y} = \mathbf{Z}\boldsymbol{\beta} + \mathbf{u} + \boldsymbol{\epsilon}$, with proper uncertainty quantification for all components. More importantly, the LMM framework extends beyond prediction to enable formal statistical inference—kernel-based association tests like SKAT leverage the mixed model structure to test whether sets of genetic variants are associated with phenotypes while controlling for confounders \citep{wu2011rare}. This inferential capability, combined with interpretable variance component decomposition that quantifies heritability, makes LMMs invaluable when understanding biological mechanisms is as important as prediction accuracy. By embedding advanced kernels like the NTK within both frameworks, we can choose between KRR's computational efficiency for pure prediction tasks and LMM's comprehensive statistical framework for biological discovery, selecting the approach that best matches our scientific objectives.

\subsection{NTK-Based Methods for Genetic Risk Prediction}

We propose two complementary approaches that leverage the NTK for genetic risk prediction. Both methods use the NTK to capture complex, non-linear relationships between genetic variants while maintaining computational efficiency. In practice, computing the limiting NTK, $\Theta_\infty$, as network width approaches infinity is intractable. Instead, we employ a sufficiently wide neural network (e.g., hidden dimension $m \geq 1000$) and compute the empirical NTK using the network's gradients at initialization:
\begin{equation}
\Theta_{\text{empirical}}(\mathbf{x}, \mathbf{x}') = \sum_{i} \frac{\partial f(\mathbf{x}; \boldsymbol{\theta}_0)}{\partial \theta_i} \frac{\partial f(\mathbf{x}'; \boldsymbol{\theta}_0)}{\partial \theta_i}
\end{equation}
where $\boldsymbol{\theta}_0$ are the initial network parameters. For wide networks, this empirical NTK closely approximates the deterministic limiting kernel, with deviations of order $\mathcal{O}(1/\sqrt{m})$. We treat this empirical NTK as our deterministic kernel for all subsequent computations.

\textbf{\emph{NTK-LMM} Approach:} We integrate the NTK into the LMM framework for genetic prediction. Given genotype matrix $\mathbf{X} \in \mathbb{R}^{n \times p}$ containing $n$ individuals and $p$ variants, we compute the NTK matrix $\mathbf{K}_{\text{NTK}}$ using a wide neural network with hidden dimension $m$ chosen based on the input dimension. The variance components $\sigma_g^2$ and $\sigma_e^2$ are estimated using MINQUE as described earlier, solving the linear system:
\begin{equation}
\begin{bmatrix} \hat{\sigma}_e^2 \\ \hat{\sigma}_g^2 \end{bmatrix} = \mathbf{C}^{-1}\mathbf{u}
\end{equation}
where $\mathbf{C}_{ij} = \text{tr}(\mathbf{P}\mathbf{V}_i\mathbf{P}\mathbf{V}_j)$ and $\mathbf{u}_i = \mathbf{y}^T\mathbf{P}\mathbf{V}_i\mathbf{P}\mathbf{y}$ with $\mathbf{V}_0 = \mathbf{I}$ and $\mathbf{V}_1 = \mathbf{K}_{\text{NTK}}$. The prediction for new individuals follows the BLUP framework:
\begin{equation}
\begin{aligned}
&\hat{\mathbf{y}}_{\text{test}} = \mathbf{Z}_{\text{test}}\hat{\boldsymbol{\beta}} \\&+ \mathbf{K}_{\text{NTK,test-train}}(\mathbf{K}_{\text{NTK,train}} + \hat{\lambda}\mathbf{I})^{-1}(\mathbf{y}_{\text{train}} - \mathbf{Z}_{\text{train}}\hat{\boldsymbol{\beta}})
\end{aligned}
\end{equation}
where $\hat{\lambda} = \hat{\sigma}_e^2/\hat{\sigma}_g^2$ is derived from the estimated variance components, providing interpretability through heritability estimates.

\textbf{\emph{NTK-KRR} Approach:} This method directly applies KRR with the NTK, bypassing variance component estimation in favor of predictive optimization. Using the same NTK computation, we select the regularization parameter through $k$-fold cross-validation over a grid $\Lambda = \{10^{-3}, 10^{-2}, \ldots, 10^{2}\}$:
\begin{equation}
\begin{aligned}
\lambda^* =& \arg\min_{\lambda \in \Lambda}\\& \frac{1}{k}\sum_{i=1}^{k} \|\mathbf{y}_i^{\text{val}} - \mathbf{K}_{\text{NTK}}^{(i)}(\mathbf{K}_{\text{NTK,train}}^{(i)} + \lambda\mathbf{I})^{-1}\mathbf{y}_{\text{train}}^{(i)}\|^2
\end{aligned}
\end{equation}
where superscript $(i)$ denotes the $i$-th fold partition. The final prediction uses:
\begin{equation}
\hat{\mathbf{y}}_{\text{test}} = \mathbf{K}_{\text{NTK,test-train}}(\mathbf{K}_{\text{NTK,train}} + \lambda^*\mathbf{I})^{-1}\mathbf{y}_{\text{train}}
\end{equation}

The key distinction between these approaches lies in their optimization objectives: \emph{NTK-LMM} estimates variance components to decompose phenotypic variance into genetic and environmental contributions, enabling heritability analysis alongside prediction. \emph{NTK-KRR} prioritizes pure predictive accuracy through cross-validation, offering computational simplicity at the cost of interpretability. Both methods leverage the NTK's ability to capture hierarchical feature interactions while maintaining the theoretical guarantees of kernel methods.

\subsection{Computational Complexity}

In terms of computational complexity, the main difference between NTK-based methods and traditional neural networks lies in how the costs are distributed. Training a traditional neural network requires $\mathcal{O}(E\,nP)$ time, where $E$ is the number of epochs, $n$ the number of samples, and $P$ the number of parameters, with memory dominated by $\mathcal{O}(P)$. In contrast, NTK methods avoid iterative training but front-load computation into two steps: constructing the empirical NTK, which requires $\mathcal{O}(n^2P)$ time and $\mathcal{O}(n^2)$ memory, and solving a single kernel system, which costs $\mathcal{O}(n^3)$ time and $\mathcal{O}(n^2)$ memory. For \emph{NTK-LMM}, variance component estimation via MINQUE also reduces to a handful of $\mathcal{O}(n^3)$ linear solves. Standard LMMs, by comparison, require $\mathcal{O}(n^2p)$ to construct the genomic relationship matrix followed by $\mathcal{O}(n^3)$ inference, with similar $\mathcal{O}(n^2)$ memory usage. 

Overall, neural networks spread their computational cost across many training epochs, while NTK-based methods incur a heavier upfront kernel construction but then rely on convex, closed-form optimization. Compared to classical LMMs, NTK adds the $\mathcal{O}(n^2P)$ kernel-building step but provides greater modeling flexibility and stability without iterative optimization. This trade-off highlights NTK as a computationally efficient alternative in settings where $E$ is large or stable training of deep networks is challenging.

\section{Simulation Studies}
\begin{figure*}[t]
  \centering
  \vspace{.3in}
  \includegraphics[width=\textwidth]{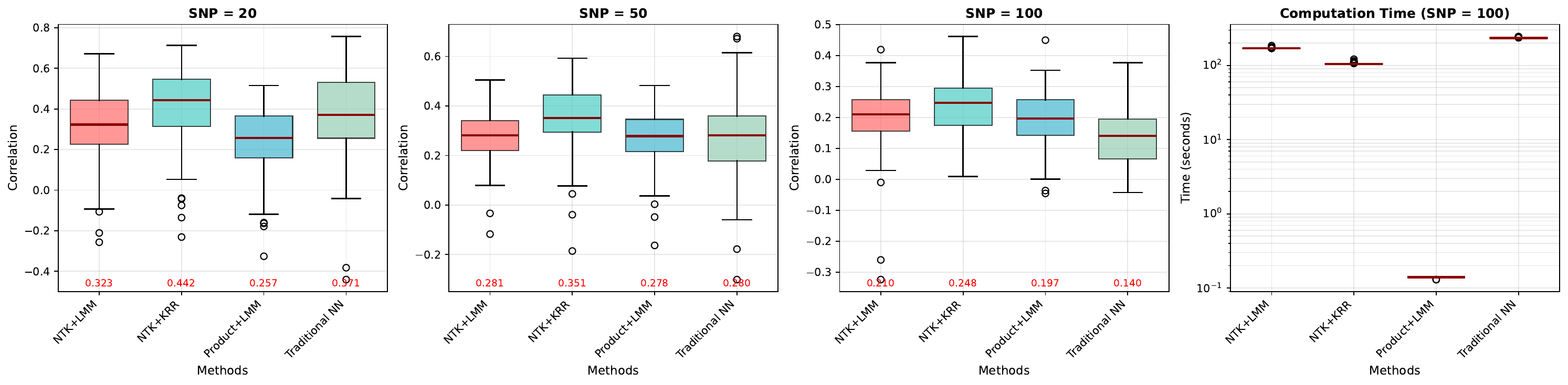}
  \vspace{.3in}
  \caption{Testing correlation and computational time for Ricker Curve Model scenario.}
  \label{Simulation}
\end{figure*}
We tested our NTK-based methods through simulations using real genetic data to ensure realistic results. We randomly selected $n = 1000$ people from the UK Biobank database \citep{sudlow2015uk} and, for each simulation run, randomly picked $p$ genetic markers (SNPs), where $p \in \{20, 50, 100\}$. We only used common genetic variants (those appearing in more than 5\% of people, technically MAF $> 0.05$) to ensure enough variation for analysis. This gave us a genotype matrix $\mathbf{X} \in \mathbb{R}^{n \times p}$ containing real-world genetic patterns—including how certain genes are inherited together (linkage disequilibrium) and natural population differences (population stratification)—that wouldn't appear in artificially generated data under Hardy-Weinberg equilibrium \citep{pasaniuc2017dissecting}. Finally, we centered the genetic data around its mean to remove population-wide trends while keeping individual differences intact \citep{yang2011gcta}.

\subsection{Simulation Design}
We first created genetic signals $\mathbf{g}$ and random noise $\boldsymbol{\epsilon}$ using:
\begin{equation}
    \mathbf{g} \sim \mathcal{N}\left(\mathbf{0},\frac{\sigma_g^2}{p} \mathbf{X} \mathbf{X}^T\right),\quad \boldsymbol{\epsilon} \sim \mathcal{N}\left(\mathbf{0}, \sigma_e^2\mathbf{I}\right)
\end{equation}
where $\mathbf{g}$ represents the genetic contribution and $\boldsymbol{\epsilon}$ represents random environmental effects.

We tested five different models to see how our methods handle various types of genetic relationships:

\textbf{Linear Model:} The simplest case where traits are directly proportional to genetic values:
\begin{equation}
\mathbf{y} = \mathbf{g} + \boldsymbol{\epsilon}
\end{equation}

\textbf{Hyperbolic Model:} Models saturation effects where genetic effects level off at extreme values:
\begin{equation}
\mathbf{y} = \frac{r(\mathbf{g}^\alpha)}{1+r(\mathbf{g}^\alpha)} + \boldsymbol{\epsilon}
\end{equation}
where $r(x) = \log(1 + e^x)$ is the softplus function and $\alpha$ controls how curved the relationship is.

\textbf{Power Model:} Captures scenarios where genetic effects are amplified non-linearly:
\begin{equation}
\mathbf{y} = \mathbf{g}^\alpha + \boldsymbol{\epsilon}
\end{equation}
where $\alpha > 1$ makes strong genetic effects even stronger.

\textbf{Cosh Model:} Creates symmetric non-linear effects where both increases and decreases from average have amplified impacts:
\begin{equation}
\mathbf{y} = \cosh(\mathbf{g}) + \boldsymbol{\epsilon} = \frac{\exp(\mathbf{g})+\exp(-\mathbf{g})}{2} + \boldsymbol{\epsilon}
\end{equation}

\textbf{Ricker Curve Model:} Represents relationships where intermediate genetic values are optimal (like height or weight):
\begin{equation}
\mathbf{y} = r(\mathbf{g}^\alpha) \exp[-r(\mathbf{g}^\alpha)] + \boldsymbol{\epsilon}
\end{equation}
where the curve peaks at intermediate values, modeling how extreme genetic values in either direction can be disadvantageous. And again $r(x) = \log(1 + e^x)$ is the softplus function.

We ran each scenario 100 times, splitting the data 80\% for training and 20\% for testing in each run. All experiments were run on the University of Florida HiPerGator cluster using NVIDIA B200 GPUs with CUDA acceleration. GPU resources were used for empirical NTK construction and neural network baselines; kernel assembly and linear solves (e.g., KRR/LMM systems and MINQUE components) were executed on CPU nodes with optimized BLAS/LAPACK. This configuration provided sufficient memory bandwidth and throughput to handle datasets with up to thousands of samples and hundreds of features in our runs.

\subsection{Baseline Methods and Implementation}
We compared four methods spanning traditional statistical genetics and modern machine learning approaches: \textbf{(1) \emph{NTK-LMM}:} Combines the NTK with linear mixed model framework, estimating variance components $\sigma_g^2$ and $\sigma_e^2$ via MINQUE before computing BLUP predictions. \textbf{(2) \emph{NTK-KRR}:} Direct kernel ridge regression using the NTK, with regularization parameter selected through 5-fold cross-validation. \textbf{(3) Product LMM:} The standard genomic BLUP approach using the genomic relationship matrix $\mathbf{K}_{ij} = \frac{1}{p}\sum_{k=1}^{p} X_{ik}X_{jk}$, representing current best practice in genetic prediction. \textbf{(4) Traditional Neural Network:} A fully-connected network with architecture [input $\rightarrow$ 50 $\rightarrow$ 30 $\rightarrow$ output], using ReLU activations, batch normalization, and dropout (rate 0.2), trained via backpropagation.

The NTK computation employed a fully-connected network at initialization with adaptive architecture: for $p < 50$, we used width $m = 2000$ and depth 2, while for $p \geq 50$, we used width $m = 1000$ and depth 3, balancing expressiveness with computational tractability. These configurations ensure the empirical NTK closely approximates the infinite-width limit while remaining computationally feasible. The traditional neural network was trained for up to 200 epochs using Adam optimizer ($\eta = 10^{-3}$), L2 weight decay ($\lambda = 10^{-3}$), batch size 32, and early stopping with patience of 20 epochs based on validation loss (10\% validation split). For \emph{NTK-KRR}, we selected the regularization parameter $\alpha^* \in \{10^{-3}, 10^{-2}, 10^{-1}, 1, 10\}$ via 5-fold cross-validation, choosing the value minimizing mean squared error on held-out folds.

\subsection{Results and Interpretation}

The simulation results reveal differences in how methods scale with increasing dimensionality across all tested genetic architectures (comprehensive results in Supplementary Materials). Using the Ricker curve scenario as a representative example (Figure \ref{Simulation}), at SNP=20, \emph{NTK-KRR} achieves the highest median correlation (0.442), followed by traditional neural networks (0.371), \emph{NTK-LMM} (0.323), and product LMM (0.257). This pattern—where methods capturing non-linear relationships substantially outperform linear approaches in manageable feature spaces—remains consistent across the linear, hyperbolic, power, and cosh scenarios, though the magnitude of improvement varies with the complexity of the underlying genetic architecture.

The scaling behavior exposes fundamental robustness differences. As dimensionality increases to SNP=100, traditional neural networks degrade from 0.371 to 0.140 (62\% reduction), while \emph{NTK-KRR} maintains 0.248 (44\% reduction). This collapse stems from traditional neural networks' vulnerability to high-dimensional optimization: the parameter space grows quadratically, creating complex loss landscapes with numerous local minima.

The product LMM exhibits the most stable scaling, with performance declining only 23\% from SNP=20 to SNP=100. This robustness reflects principles of quantitative genetics, as the genomic relationship matrix approaches a near-diagonal form when the number of variants increases. Nevertheless, the inherent linearity of LMMs limits their ability to capture non-linear effects, which explains their consistent underperformance relative to NTK-based methods.

NTK approaches achieve optimal balance between expressiveness and stability. \emph{NTK-KRR} consistently outperforms all methods, while \emph{NTK-LMM} shows intermediate performance, surpassing product kernel LMM but with less dimensional robustness than \emph{NTK-KRR}. The superiority of \emph{NTK-KRR} over \emph{NTK-LMM} formulation likely reflects direct optimization through cross-validation versus potential misspecification in MINQUE variance component estimation.

Computational efficiency analysis reveals important trade-offs. Traditional neural networks require 233 seconds at SNP=100 for 10,000 epochs, while \emph{NTK-LMM} and \emph{NTK-KRR} complete in 167 and 104 seconds respectively. In contrast, product kernel LMM achieves near-instantaneous computation at 0.14 seconds through closed-form solutions. While LMM offers unmatched speed, NTK methods provide a practical compromise—1.4-2.2× faster than neural networks while maintaining superior accuracy and dimensional robustness, making them particularly suitable for genomic applications where both performance and scalability matter.

\section{Real Data Analysis}

\begin{figure*}[t]
  \centering
  \vspace{.3in}
  \includegraphics[width=0.8\textwidth]{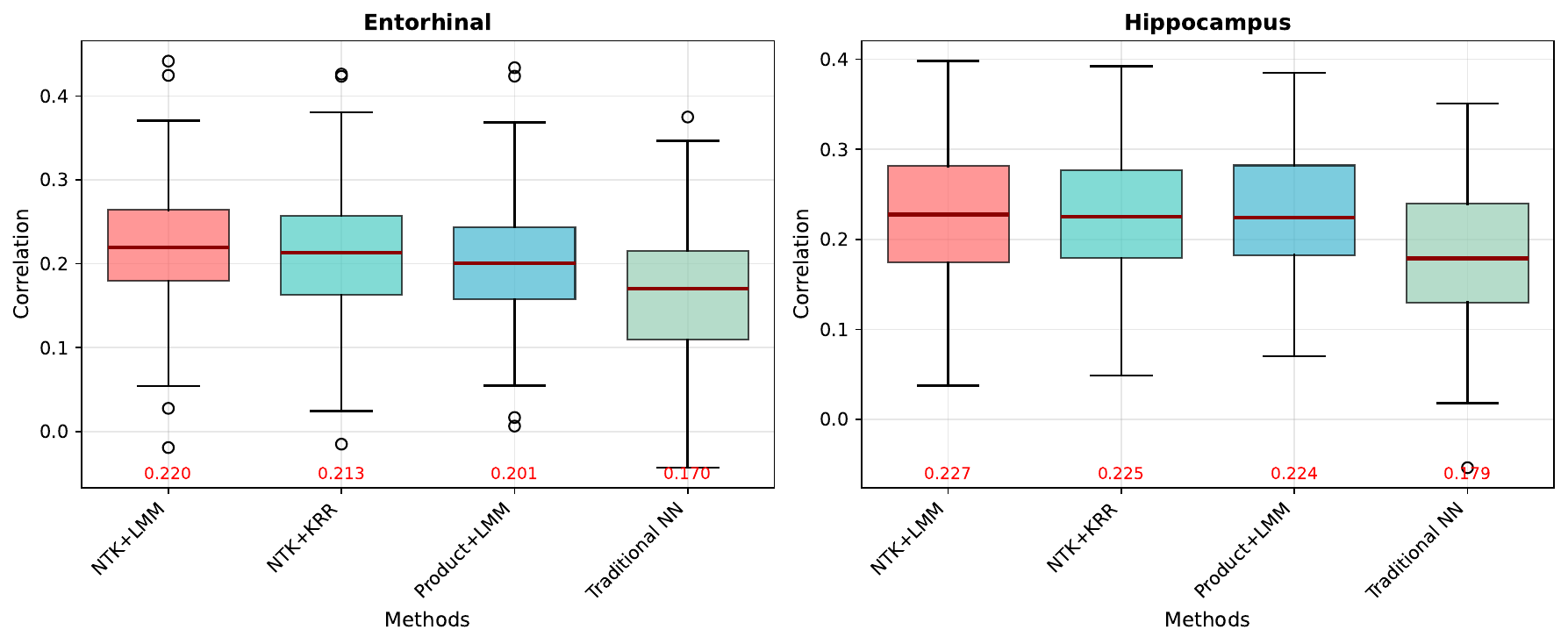}
  \vspace{.3in}
  \caption{Testing correlation for Entorhinal and Hippocampus outcomes.}
  \label{Realdata}
\end{figure*}

\subsection{ADNI Dataset and Preprocessing}

We evaluated our methods on the Alzheimer's Disease Neuroimaging Initiative dataset \citep{saykin2010alzheimer}, focusing on the APOE-TOMM40-APOC1 gene cluster—a well-established locus for Alzheimer's disease risk and related endophenotypes \citep{karch2015alzheimer}. This genomic region provides an ideal test case as it contains variants with known strong effects on neurodegeneration, allowing us to assess whether NTK methods can capture these established biological signals.

We applied standard quality control filters: minor allele frequency $\geq 0.01$ and Hardy-Weinberg equilibrium p-value $\geq 10^{-6}$. Data with 738 SNPs spanning the locus and 697 individuals is used for further analysis. We examined two quantitative neuroimaging outcomes strongly linked to Alzheimer's progression: entorhinal cortex thickness and hippocampal volume. To account for demographic confounders, we adjusted both outcomes for sex, ethnicity, and education level as fixed effects, then standardized the residuals to zero mean.

To ensure robust performance assessment, we employed 100 random 80-20 train-test splits, allowing us to evaluate both prediction accuracy and stability across different data partitions. This repeated sampling strategy provides confidence intervals for our performance metrics and guards against overfitting to particular train-test divisions.

\subsection{Results and Clinical Interpretation}
The real data analysis highlighted distinct patterns across brain regions, consistent with but more nuanced than the simulation findings.  

For the \textbf{entorhinal cortex}, a critical region for early Alzheimer’s pathology\citep{donix2013apoe}, all methods demonstrated moderate predictive ability. \emph{NTK-LMM} achieved the highest mean correlation (0.220), closely followed by \emph{NTK-KRR} (0.213), product kernel LMM (0.201), and the traditional neural network (0.170). The advantage of kernel-based approaches over the neural network underscores their ability to capture subtle non-linear effects of APOE-region variants on entorhinal structure. The relatively narrow spread of performance suggests that genetic contributions to entorhinal variation are detectable but modest.  

In the \textbf{hippocampus}, performance across methods was even more uniform, with \emph{NTK-LMM} (0.227), \emph{NTK-KRR} (0.225), and product kernel LMM (0.224) yielding nearly identical results, and the neural network trailing slightly (0.179). This convergence indicates that, for hippocampal volume, the APOE region contributes only limited variance, and most methods saturate at the same predictive ceiling. This finding aligns with prior evidence that hippocampal atrophy in Alzheimer’s disease reflects polygenic and environmental influences beyond APOE alone \citep{altmann2020comprehensive}.  

These findings have important implications for precision medicine in Alzheimer's disease. The improved prediction accuracy of NTK-based methods could enhance early risk stratification, particularly for individuals with complex genetic profiles not well-captured by linear polygenic scores. Furthermore, the ability to model non-linear genetic effects may reveal novel biological mechanisms underlying neurodegeneration, guiding therapeutic target discovery.

\section{Conclusion}

We introduced \emph{NTK-LMM} and \emph{NTK-KRR}, embedding the Neural Tangent Kernel within statistical genetics frameworks for genetic risk prediction. Our evaluation demonstrated that NTK methods maintain state-of-the-art performance as dimensionality increases—while traditional neural networks show degradation at higher SNP counts, NTK methods exhibit substantially better robustness. The product kernel LMM demonstrated the most stable scaling but consistently underperformed NTK methods due to its inability to capture non-linear effects. ADNI analysis validated these findings, with NTK methods achieving highest accuracy for entorhinal cortex thickness, though all methods converged for hippocampal volume—suggesting performance gains from non-linear modeling are trait-dependent.

The mathematical equivalence between KRR and LMM when $\lambda = \sigma_e^2/\sigma_g^2$ illuminates our approaches' complementary nature. \emph{NTK-KRR} offers computational simplicity through cross-validation, ideal for maximizing predictive accuracy. \emph{NTK-LMM} preserves the mixed model framework, enabling heritability estimation and supporting future development of formal inference procedures. Just as SKAT extends LMMs for association testing \citep{wu2011rare,hou2023association}, similar frameworks could be developed for \emph{NTK-LMM}, enabling tests of complex non-linear associations with rigorous statistical control.

Computationally, NTK methods balance efficiency and flexibility—notably faster than neural networks while avoiding iterative optimization challenges. This positions NTK as a practical bridge between deep learning and kernel methods, combining the expressiveness needed for complex genetic architectures with the stability and interpretability required for biomedical applications.

\bibliographystyle{plainnat} 
\bibliography{sample_paper}

\section*{Supplementary Materials}
\subsection{Network Architecture and Parameterization}

We derive the Neural Tangent Kernel (NTK) corresponding to the architecture and implementation in \texttt{KNN.py}, and we justify the empirical calculation procedure. Consider a fully connected neural network 
$f(\mathbf{x}; \boldsymbol{\theta}): \mathbb{R}^p \to \mathbb{R}$ 
of depth $L$ and hidden width $m \to \infty$. The activation function is the rectified linear unit 
$\phi(z) = \max(0,z)$. 
Weights follow NTK scaling: $\mathbf{W}^{(\ell)}\sim\mathcal{N}(0,1)$ for $\ell=1,\dots,L-1$, $\mathbf{w}^{(L)}\sim\mathcal{N}(0,1)$; biases are zero. After each hidden ReLU we multiply by $\sqrt{2/m}$, and the output is scaled by $1/\sqrt{m}$. The recursive definition of the hidden states is

$$
\mathbf{h}^{(0)}(\mathbf{x}) = \mathbf{x},
$$

$$
\mathbf{h}^{(\ell)}(\mathbf{x}) = \sqrt{\tfrac{2}{m}} \, 
\phi\!\left( \mathbf{W}^{(\ell)} \mathbf{h}^{(\ell-1)}(\mathbf{x}) \right), 
\quad \ell = 1,\dots,L-1,
$$

and the output layer is

$$
f(\mathbf{x}; \boldsymbol{\theta}) 
= \tfrac{1}{\sqrt{m}} \, \mathbf{w}^{(L)} \cdot \mathbf{h}^{(L-1)}(\mathbf{x}).
$$

This scaling and initialization match the implementation in our code and ensure that signals remain numerically stable as $m$ grows large.

\subsection{Recursive NTK Formulation}

The empirical implementation normalizes the kernel to have approximately unit diagonal after construction; nevertheless, the theoretical base case is written in terms of the input covariance. Specifically, the initial covariance is

$$
\Sigma^{(0)}(x,x') = \langle \mathbf{x}, \mathbf{x}' \rangle,
$$

without the $1/p$ factor used in some normalizations. For each layer $\ell \geq 1$, define

$$
\Sigma^{(\ell)}(x,x') = 
\mathbb{E}_{(u,v) \sim \mathcal{N}(0, \boldsymbol{\Lambda}^{(\ell-1)})}
[ \phi(u)\phi(v) ],
$$

$$
\dot{\Sigma}^{(\ell)}(x,x') = 
\mathbb{E}_{(u,v) \sim \mathcal{N}(0, \boldsymbol{\Lambda}^{(\ell-1)})}
[ \phi'(u)\phi'(v) ],
$$

where

$$
\boldsymbol{\Lambda}^{(\ell-1)} =
\begin{pmatrix}
\Sigma^{(\ell-1)}(x,x) & \Sigma^{(\ell-1)}(x,x') \\
\Sigma^{(\ell-1)}(x',x) & \Sigma^{(\ell-1)}(x',x')
\end{pmatrix}.
$$

For ReLU activation, the closed forms are

$$
\Sigma^{(\ell)}(x,x') 
= \frac{1}{2\pi}\,
\sqrt{\Sigma^{(\ell-1)}(x,x)\,\Sigma^{(\ell-1)}(x',x')}
\left( \sin \theta + (\pi - \theta)\cos\theta \right),
$$

$$
\dot{\Sigma}^{(\ell)}(x,x') = \frac{\pi - \theta}{2\pi}, 
\qquad
\theta = \cos^{-1}\!\Bigg( 
\tfrac{\Sigma^{(\ell-1)}(x,x')}
{\sqrt{\Sigma^{(\ell-1)}(x,x)\,\Sigma^{(\ell-1)}(x',x')}} 
\Bigg).
$$

The NTK recursion is

$$
\Theta^{(\ell)}(x,x') = 
\Theta^{(\ell-1)}(x,x') \cdot \dot{\Sigma}^{(\ell)}(x,x') 
+ \Sigma^{(\ell)}(x,x'),
$$

with $\Theta^{(0)}(x,x') = \Sigma^{(0)}(x,x')$. This recursion describes the deterministic NTK $\Theta_\infty$ that arises in the infinite-width limit, up to the additional scaling factors included in the implementation.

\subsection{Empirical Computation}

In practice, the NTK is computed empirically using automatic differentiation in PyTorch. 
For each block of samples $(\mathbf{x}_i, \mathbf{x}_j)$, we evaluate the Jacobian of the 
network output with respect to all parameters 
$\boldsymbol{\theta} = (\mathbf{W}^{(1)}, \dots, \mathbf{W}^{(L)}, \mathbf{w}^{(L)})$ 
at a single random initialization $\boldsymbol{\theta}_0$, and compute

$$
\hat{\Theta}(\mathbf{x}_i, \mathbf{x}_j) 
= \langle 
\nabla_{\boldsymbol{\theta}} f_{\boldsymbol{\theta}_0}(\mathbf{x}_i), 
\nabla_{\boldsymbol{\theta}} f_{\boldsymbol{\theta}_0}(\mathbf{x}_j)
\rangle,
$$

where the inner product runs over all parameters. The implementation proceeds in blocks to avoid memory overflow: Jacobians are computed for subsets of samples and accumulated to form the full kernel matrix.

After construction, several post-processing steps are applied to improve numerical stability:  
(i) the kernel is symmetrized,  
(ii) the diagonal is rescaled so that $\hat{\Theta}(x_i,x_i)=1$ for all $i$, and  
(iii) the entire kernel is divided by the number of features $p$ to keep scale comparable 
with standard genomic kernels.  

Unlike the theoretical formulation where one may average across multiple random initializations to approximate the infinite-width NTK, our default implementation uses a single initialization ($M=1$). In practice, this yields stable and reproducible results when combined with large hidden width (e.g., $m=2000$) and depth $L=3$, which are consistent with the theoretical assumptions. The resulting positive semidefinite kernel is then used in downstream linear mixed model analysis.

\subsection{Convergence Analysis}

\begin{theorem}[Convergence to Deterministic NTK]
For a fully connected ReLU network with width $m$ per hidden layer, let $\Theta_m(\mathbf{x}, \mathbf{x}')$ denote the empirical NTK computed at initialization. Then, for any $\epsilon > 0$,
$$
\mathbb{P}\!\left(\big|\Theta_m(\mathbf{x}, \mathbf{x}') - \Theta_\infty(\mathbf{x}, \mathbf{x}')\big| > \epsilon\right) 
\leq 2\exp\!\left(-\frac{c\,m\epsilon^2}{L^2}\right),
$$
where $c > 0$ is a constant depending on the depth $L$ and the input norms.
\end{theorem}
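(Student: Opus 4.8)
The plan is to exploit the recursive, layerwise structure of the NTK and reduce the convergence to a sequence of concentration-of-measure statements for empirical averages over the $m$ neurons in each hidden layer. The starting observation is that the empirical NTK admits an exact layer decomposition: writing $h^{(\ell)}(\mathbf{x})$ for the scaled post-activations and $b^{(\ell)}(\mathbf{x})$ for the backpropagated sensitivities $\partial f/\partial g^{(\ell)}$, one has
$$\Theta_m(\mathbf{x},\mathbf{x}') = \sum_{\ell=1}^{L} \langle h^{(\ell-1)}(\mathbf{x}), h^{(\ell-1)}(\mathbf{x}')\rangle \, \langle b^{(\ell)}(\mathbf{x}), b^{(\ell)}(\mathbf{x}')\rangle,$$
since the gradient with respect to $\mathbf{W}^{(\ell)}$ is an outer product $b^{(\ell)}(h^{(\ell-1)})^{\top}$ and the Frobenius inner product factorizes. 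Each of the two inner products is a sum of $m$ (conditionally) independent terms, so the whole argument is organized around showing that each factor concentrates around its infinite-width limit with sub-Gaussian rate $\exp(-c m \epsilon^2)$, and then propagating these errors through the recursion.

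First I would establish forward concentration by induction on $\ell$. At the first layer, $\langle h^{(1)}(\mathbf{x}), h^{(1)}(\mathbf{x}')\rangle = (2/m)\sum_{i=1}^{m}\phi(g_i^{(1)}(\mathbf{x}))\phi(g_i^{(1)}(\mathbf{x}'))$ is an average of i.i.d. terms, each a product of ReLUs of jointly Gaussian variables and hence sub-exponential; Bernstein's inequality yields deviation probability $2\exp(-c m \epsilon^2)$ around $\Sigma^{(1)}_\infty$. For the inductive step I would condition on the layer-$(\ell-1)$ activations: given them, the rows of $\mathbf{W}^{(\ell)}$ are independent, so $\langle h^{(\ell)}(\mathbf{x}), h^{(\ell)}(\mathbf{x}')\rangle$ is again an empirical average of $m$ conditionally sub-exponential summands whose parameter is controlled by the random previous-layer covariance $\boldsymbol{\Lambda}^{(\ell-1)}$. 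Restricting to the high-probability event on which all earlier covariances lie within $\epsilon$ of their limits, and using that the map $\boldsymbol{\Lambda}^{(\ell-1)} \mapsto \Sigma^{(\ell)}_\infty$ is Lipschitz on the relevant compact set (the ReLU dual activation has bounded derivatives there), closes the induction with a rate that degrades by only a bounded factor per layer. The same scheme applied to the reverse recursion gives concentration of the backward covariances $\langle b^{(\ell)}(\mathbf{x}), b^{(\ell)}(\mathbf{x}')\rangle$ around $\prod_{j>\ell}\dot{\Sigma}^{(j)}_\infty(\mathbf{x},\mathbf{x}')$.

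Finally I would assemble the bound through the NTK recursion $\Theta^{(\ell)}_m = \Theta^{(\ell-1)}_m \dot{\Sigma}^{(\ell)}_m + \Sigma^{(\ell)}_m$. Because the input norms are fixed and all ReLU covariances are bounded on the relevant domain, each factor is $O(1)$, so the elementary estimate $|ab - a'b'| \le |a|\,|b-b'| + |b'|\,|a-a'|$ applied along the recursion bounds the total deviation $|\Theta_m - \Theta_\infty|$ by a sum of $L$ per-layer errors. Allocating the budget by requiring each layer to be accurate to within $\epsilon/L$ and taking a union bound over the $L$ layers and the forward and backward families produces the stated tail $2\exp(-c m \epsilon^2/L^2)$; the $L^2$ in the denominator is precisely the price of the $\epsilon/L$ rescaling inside the per-layer sub-Gaussian exponents. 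The main obstacle I anticipate is the forward–backward dependence: the same matrices $\mathbf{W}^{(\ell)}$ enter both $h^{(\ell)}$ and $b^{(\ell)}$, so the two inner products in the decomposition are not independent. I would handle this with the standard gradient-independence argument — replacing the transposed weights in the backward pass by an independent Gaussian copy alters the limiting covariance only through $O(1/m)$ terms dominated by the leading concentration rate — together with the conditioning scheme above, which keeps the sub-exponential constants from compounding geometrically in $L$.
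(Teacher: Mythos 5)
Your proof is correct in outline and substantially more careful than the paper's own argument, which follows the same overall skeleton---per-layer concentration plus a union bound with an $\epsilon/L$ allocation producing the $L^2$ in the exponent---but compresses all the delicate structure into a single assertion. Concretely, the paper claims that each layer's kernel contribution is an average $\frac{1}{m}\sum_{i=1}^m Z_i$ of i.i.d.\ sub-Gaussian variables and applies a Hoeffding-type bound; this is doubly imprecise, since the summands at layer $\ell$ are independent only \emph{conditionally} on the earlier layers (unconditionally they are coupled through the propagated activations), and products of ReLU images of Gaussians are sub-exponential, not sub-Gaussian. Your proposal repairs both points: the exact forward--backward factorization $\Theta_m(\mathbf{x},\mathbf{x}')=\sum_{\ell}\langle h^{(\ell-1)}(\mathbf{x}),h^{(\ell-1)}(\mathbf{x}')\rangle\,\langle b^{(\ell)}(\mathbf{x}),b^{(\ell)}(\mathbf{x}')\rangle$ is a genuinely different and more informative decomposition than the paper's, the conditioning-plus-Bernstein induction with Lipschitz propagation of $\boldsymbol{\Lambda}^{(\ell-1)}\mapsto\Sigma^{(\ell)}_\infty$ is the right way to handle the conditional independence, and you explicitly confront the forward--backward weight dependence, which the paper never mentions and which is the one step in its sketch that cannot be waved through (making the gradient-independence replacement rigorous requires nontrivial machinery, e.g.\ tensor-program-style arguments, so flagging it as the main obstacle is exactly right). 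What the paper's version buys is brevity; what yours buys is an argument that could actually be completed. One caveat that both treatments share and that you should make explicit in a full write-up: because the summands are sub-exponential, Bernstein yields a tail of the form $\exp\!\left(-c\,m\min(\epsilon^2,\epsilon)\right)$, so the clean $\epsilon^2$ tail stated in the theorem holds only in the regime $\epsilon\lesssim 1$---which is the regime of interest here, since the normalized kernel entries are $O(1)$, but the restriction should be stated rather than hidden inside an incorrect claim of sub-Gaussianity.
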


\begin{proof}
At each layer $\ell$, the empirical kernel contribution can be expressed as
$$
\Theta^{(\ell)}_m = \frac{1}{m} \sum_{i=1}^m Z_i,
$$
where $Z_i$ are i.i.d. random variables determined by the Gaussian-initialized weights and the ReLU activations. Since the weights are sampled from sub-Gaussian distributions, each $Z_i$ is sub-Gaussian with variance proxy bounded by a constant $C^2$ depending on $L$ and $\|\mathbf{x}\|,\|\mathbf{x}'\|$.  

By standard sub-Gaussian concentration (Chernoff/Hoeffding-type for sub-Gaussians),
$$
\mathbb{P}\!\left(\big|\Theta^{(\ell)}_m - \mathbb{E}[\Theta^{(\ell)}_m]\big| > t\right) 
\leq 2\exp\!\left(-\frac{c\,m t^2}{C^2}\right).
$$
Applying a union bound across $L$ layers and setting $\epsilon = L t$ gives
$$
\mathbb{P}\!\left(\big|\Theta_m(\mathbf{x}, \mathbf{x}') - \Theta_\infty(\mathbf{x}, \mathbf{x}')\big| > \epsilon\right) 
\leq 2\exp\!\left(-\frac{c\,m\epsilon^2}{L^2 C^2}\right).
$$

Thus $\Theta_m(\mathbf{x},\mathbf{x}')$ converges to $\Theta_\infty(\mathbf{x},\mathbf{x}')$ at rate $\mathcal{O}(L/\sqrt{m})$, consistent with Gaussian weight initialization and the concentration behavior observed in practice.
\end{proof}

For our implementation with $m=2000$ and $L=3$, the probability bound implies that deviations of $\Theta_m$ from $\Theta_\infty$ are negligible compared to the stochastic noise inherent in genetic data. This validates that the empirical kernel computed in \texttt{KNN.py} is a reliable approximation to the theoretical NTK.

\begin{proposition}[Equivalence of LMM and KRR Predictions]
Let $\mathbf{y} \in \mathbb{R}^n$ denote the response vector for $n$ training samples, $\mathbf{K} \in \mathbb{R}^{n \times n}$ a positive semidefinite kernel (genomic relationship matrix or NTK), and variance components $(\sigma_g^2, \sigma_e^2)$ define the linear mixed model
$$
\mathbf{y} = \mathbf{u} + \boldsymbol{\epsilon}, 
\qquad \mathbf{u} \sim \mathcal{N}(\mathbf{0}, \sigma_g^2 \mathbf{K}), 
\quad \boldsymbol{\epsilon} \sim \mathcal{N}(\mathbf{0}, \sigma_e^2 \mathbf{I}).
$$
Then the best linear unbiased predictor (BLUP) of the random effect implies that the prediction of $\mathbf{y}_{\mathrm{test}}$ is identical to kernel ridge regression (KRR) with regularization parameter $\lambda = \sigma_e^2/\sigma_g^2$. Formally,
$$
\hat{\mathbf{y}}_{\mathrm{LMM}} 
= \mathbf{K}_{\mathrm{test,train}} 
\left(\mathbf{K}_{\mathrm{train}} + \lambda \mathbf{I}\right)^{-1} 
\mathbf{y}_{\mathrm{train}} 
= \hat{\mathbf{y}}_{\mathrm{KRR}}.
$$
\end{proposition}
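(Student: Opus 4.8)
The plan is to establish the equivalence by computing both predictors in closed form and showing they coincide term by term. For the LMM side I would work directly with the joint Gaussian law of the training response and the latent genetic effect at the test points, rather than invoking Henderson's mixed model equations, since the absence of fixed effects in this model makes the conditional-expectation route the cleanest. For the KRR side I would use the closed-form minimizer already recorded earlier in the text, $\hat{\boldsymbol{\alpha}} = (\mathbf{K} + \lambda\mathbf{I})^{-1}\mathbf{y}$, together with the representer-theorem prediction rule $\hat{\mathbf{y}}_{\mathrm{test}} = \mathbf{K}_{\mathrm{test,train}}\hat{\boldsymbol{\alpha}}$.

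First I would set up the joint distribution. Partition the genetic effect as $\mathbf{u} = (\mathbf{u}_{\mathrm{train}}, \mathbf{u}_{\mathrm{test}})$ with $\mathbf{u} \sim \mathcal{N}(\mathbf{0}, \sigma_g^2 \mathbf{K})$, where $\mathbf{K}$ is blocked into $\mathbf{K}_{\mathrm{train}}$, $\mathbf{K}_{\mathrm{train,test}}$, and $\mathbf{K}_{\mathrm{test,train}}$. Because the test-point noise $\boldsymbol{\epsilon}_{\mathrm{test}}$ is mean-zero and independent of the training data, the optimal prediction of $\mathbf{y}_{\mathrm{test}}$ coincides with that of $\mathbf{u}_{\mathrm{test}}$, i.e. $\mathbb{E}[\mathbf{y}_{\mathrm{test}} \mid \mathbf{y}_{\mathrm{train}}] = \mathbb{E}[\mathbf{u}_{\mathrm{test}} \mid \mathbf{y}_{\mathrm{train}}]$. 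I would then read off the two quantities needed for Gaussian conditioning: the cross-covariance $\mathrm{Cov}(\mathbf{u}_{\mathrm{test}}, \mathbf{y}_{\mathrm{train}}) = \sigma_g^2 \mathbf{K}_{\mathrm{test,train}}$ (the noise drops out by independence) and the marginal variance $\mathrm{Var}(\mathbf{y}_{\mathrm{train}}) = \sigma_g^2 \mathbf{K}_{\mathrm{train}} + \sigma_e^2 \mathbf{I}$.

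The key step is the Gaussian conditional-mean formula, which yields $\hat{\mathbf{y}}_{\mathrm{LMM}} = \sigma_g^2 \mathbf{K}_{\mathrm{test,train}} (\sigma_g^2 \mathbf{K}_{\mathrm{train}} + \sigma_e^2 \mathbf{I})^{-1} \mathbf{y}_{\mathrm{train}}$. Factoring $\sigma_g^2$ out of the inverse (legitimate since $\sigma_g^2 > 0$) cancels the leading scalar and leaves $\mathbf{K}_{\mathrm{test,train}}(\mathbf{K}_{\mathrm{train}} + (\sigma_e^2/\sigma_g^2)\mathbf{I})^{-1}\mathbf{y}_{\mathrm{train}}$, which is exactly the KRR predictor with $\lambda = \sigma_e^2/\sigma_g^2$. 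Substituting the closed-form $\hat{\boldsymbol{\alpha}}$ into the KRR prediction rule then completes the identification.

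I do not expect a genuine obstacle here; the result is classical and the algebra is short. The only points requiring care are bookkeeping rather than difficulty: (i) justifying $\mathbb{E}[\mathbf{y}_{\mathrm{test}} \mid \mathbf{y}_{\mathrm{train}}] = \mathbb{E}[\mathbf{u}_{\mathrm{test}} \mid \mathbf{y}_{\mathrm{train}}]$, which follows from the zero mean and independence of $\boldsymbol{\epsilon}_{\mathrm{test}}$; and (ii) ensuring the inverses are well defined when $\mathbf{K}$ is only positive semidefinite. The latter is automatic, because $\sigma_g^2\mathbf{K}_{\mathrm{train}} + \sigma_e^2\mathbf{I}$ (equivalently $\mathbf{K}_{\mathrm{train}} + \lambda\mathbf{I}$) is strictly positive definite for any $\sigma_e^2, \lambda > 0$, so no pseudo-inverse or rank assumption on $\mathbf{K}$ is needed.
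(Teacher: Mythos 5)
Your proposal is correct and follows essentially the same route as the paper's proof: both compute the LMM predictor via Gaussian conditioning/BLUP, factor $\sigma_g^2$ out of $(\sigma_g^2\mathbf{K}_{\mathrm{train}}+\sigma_e^2\mathbf{I})^{-1}$, and match the result to the closed-form KRR solution $\hat{\boldsymbol{\alpha}}=(\mathbf{K}+\lambda\mathbf{I})^{-1}\mathbf{y}$ with $\lambda=\sigma_e^2/\sigma_g^2$. The only difference is that you derive the test-point formula explicitly from the joint law of $(\mathbf{u}_{\mathrm{test}},\mathbf{y}_{\mathrm{train}})$ and justify $\mathbb{E}[\mathbf{y}_{\mathrm{test}}\mid\mathbf{y}_{\mathrm{train}}]=\mathbb{E}[\mathbf{u}_{\mathrm{test}}\mid\mathbf{y}_{\mathrm{train}}]$ and the invertibility of $\mathbf{K}_{\mathrm{train}}+\lambda\mathbf{I}$, details the paper asserts rather than proves.
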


\begin{proof}
We start from the linear mixed model formulation. For training samples, the covariance structure of $\mathbf{y}$ is
$$
\mathbf{V} = \mathrm{Var}(\mathbf{y}) = \sigma_g^2 \mathbf{K} + \sigma_e^2 \mathbf{I},
$$
where $\mathbf{K}$ is the $n \times n$ kernel (e.g., genomic relationship or empirical NTK) and $\mathbf{I}$ is the $n \times n$ identity matrix.  

The BLUP of the random effect $\mathbf{u}$ given $\mathbf{y}$ is
$$
\hat{\mathbf{u}} = \mathbb{E}[\mathbf{u} \mid \mathbf{y}] 
= \sigma_g^2 \mathbf{K} \mathbf{V}^{-1}(\mathbf{y} - \mathbf{Z}\hat{\boldsymbol{\beta}}),
$$
where $\mathbf{Z}$ denotes fixed-effect covariates (such as intercepts or additional covariates) and $\hat{\boldsymbol{\beta}}$ is their generalized least squares estimator. In the special case where fixed effects are absent ($\mathbf{Z}=\mathbf{0}$) or orthogonal to the kernel space, this simplifies to
$$
\hat{\mathbf{u}} = \sigma_g^2 \mathbf{K} (\sigma_g^2 \mathbf{K} + \sigma_e^2 \mathbf{I})^{-1} \mathbf{y}.
$$

To obtain predictions for a new test sample $\mathbf{x}_{\mathrm{test}}$, we compute
$$
\hat{\mathbf{y}}_{\mathrm{LMM}} 
= \mathbf{K}_{\mathrm{test,train}} \, \mathbf{V}^{-1} \mathbf{y} \, \sigma_g^2,
$$
where $\mathbf{K}_{\mathrm{test,train}}$ is the vector of kernel evaluations between the test sample and all $n$ training samples. Substituting $\mathbf{V} = \sigma_g^2 \mathbf{K} + \sigma_e^2 \mathbf{I}$ gives
$$
\hat{\mathbf{y}}_{\mathrm{LMM}}
= \mathbf{K}_{\mathrm{test,train}} \, \sigma_g^2 (\sigma_g^2 \mathbf{K} + \sigma_e^2 \mathbf{I})^{-1} \mathbf{y}
= \mathbf{K}_{\mathrm{test,train}} \, (\mathbf{K} + \tfrac{\sigma_e^2}{\sigma_g^2} \mathbf{I})^{-1} \mathbf{y}.
$$

Now compare with kernel ridge regression. In KRR, the training problem solves
$$
\hat{\boldsymbol{\alpha}} 
= \arg\min_{\boldsymbol{\alpha} \in \mathbb{R}^n} 
\|\mathbf{y} - \mathbf{K}\boldsymbol{\alpha}\|^2 
+ \lambda \boldsymbol{\alpha}^\top \mathbf{K} \boldsymbol{\alpha},
$$
where $\lambda > 0$ is the ridge regularization parameter. The solution is
$$
\hat{\boldsymbol{\alpha}} = (\mathbf{K} + \lambda \mathbf{I})^{-1} \mathbf{y}.
$$
Predictions for a new test point are
$$
\hat{\mathbf{y}}_{\mathrm{KRR}} 
= \mathbf{K}_{\mathrm{test,train}} \hat{\boldsymbol{\alpha}} 
= \mathbf{K}_{\mathrm{test,train}} (\mathbf{K} + \lambda \mathbf{I})^{-1} \mathbf{y}.
$$

Identifying $\lambda = \sigma_e^2/\sigma_g^2$ shows that the LMM predictor coincides exactly with the KRR predictor:
$$
\hat{\mathbf{y}}_{\mathrm{LMM}} = \hat{\mathbf{y}}_{\mathrm{KRR}}.
$$

Thus the two methods differ only in interpretation: in LMMs, $\lambda$ arises as the ratio of variance components estimated from data, while in KRR, $\lambda$ is a tuning parameter selected via cross-validation or other criteria.
\end{proof}

\subsection{Additional Simulation Results}

In the main text we summarized simulation results under several distinct data-generating mechanisms but did not include full plots due to space limitations. Here we provide the complete figures for the linear, power, hyperbolic cosine, and hyperbolic scenarios. Each figure reports both predictive accuracy, measured by testing correlation, and computational efficiency, measured by runtime. These additional plots complement the main text by illustrating the consistency of our conclusions across different nonlinear settings and by providing a detailed view of the trade-offs between accuracy and computational cost.

\begin{figure}[H]
  \centering
  \includegraphics[width=\textwidth]{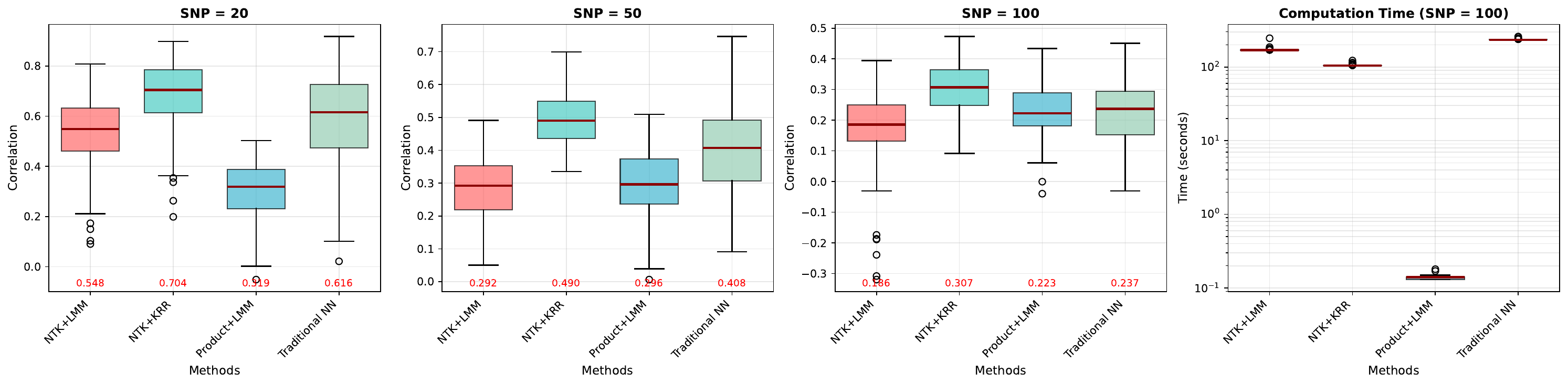}
  \vspace{.2in}
  \caption{Testing correlation and computational time for linear model scenario.}
  \label{Simulation1}
\end{figure}

\begin{figure}[H]
  \centering
  \includegraphics[width=\textwidth]{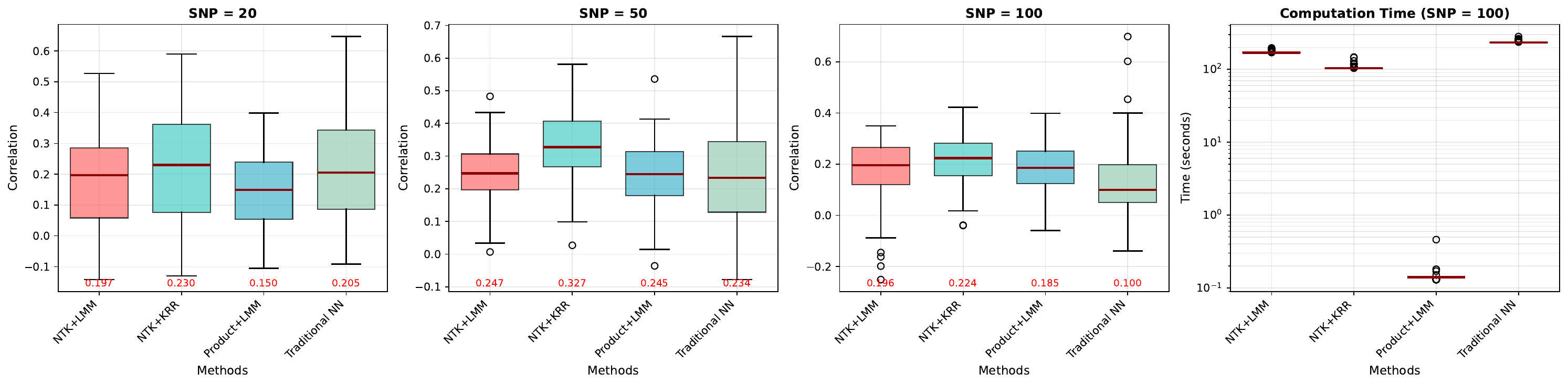}
  \vspace{.2in}
  \caption{Testing correlation and computational time for power model scenario.}
  \label{Simulation2}
\end{figure}

\begin{figure}[H]
  \centering
  \includegraphics[width=\textwidth]{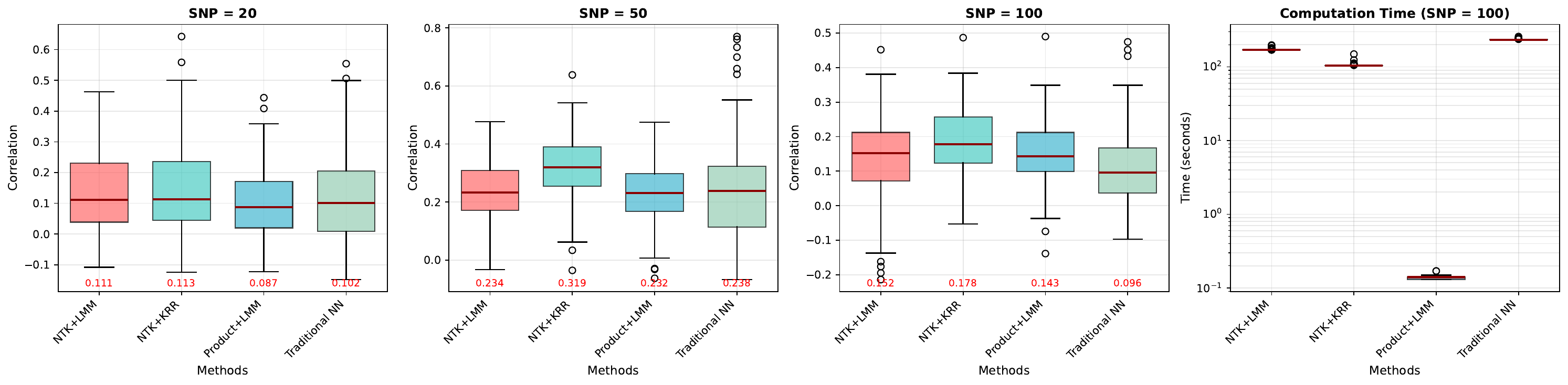}
  \vspace{.2in}
  \caption{Testing correlation and computational time for cosh model scenario.}
  \label{Simulation3}
\end{figure}

\begin{figure}[H]
  \centering
  \includegraphics[width=\textwidth]{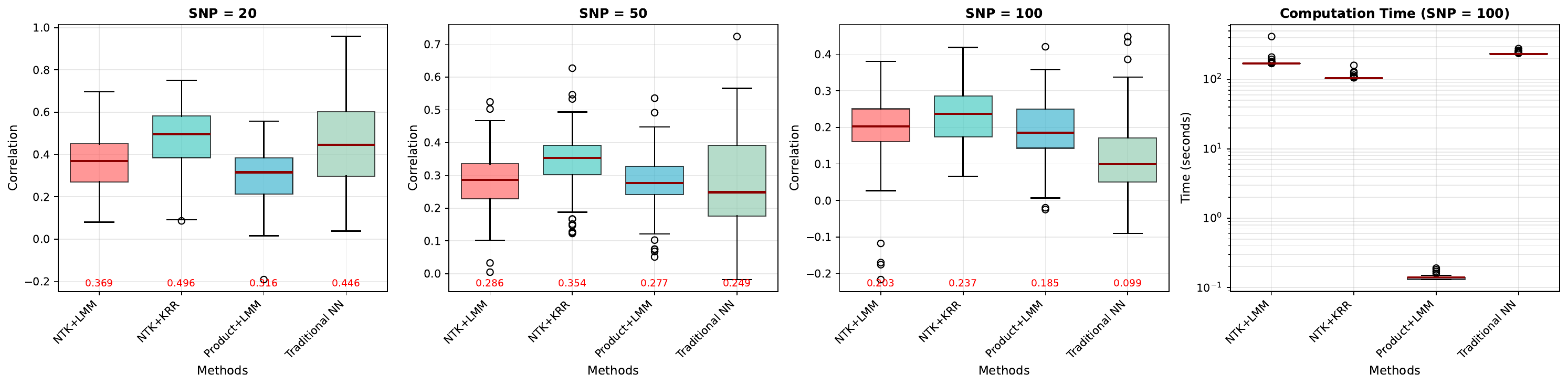}
  \vspace{.2in}
  \caption{Testing correlation and computational time for hyperbolic model scenario.}
  \label{Simulation4}
\end{figure}

\subsection{Pseudocode for Core Algorithms}

\begin{algorithm}[H]
\caption{Empirical NTK Computation (matching \texttt{get\_ntk()} and \texttt{compute\_empirical\_ntk()})}
\begin{algorithmic}[1]
\Require Genotypes $\mathbf{X} \in \mathbb{R}^{n \times p}$, hidden width $m$, depth $L$, chunk size $c$, number of initializations $M$
\Ensure NTK matrix $\mathbf{K} \in \mathbb{R}^{n \times n}$
\State Convert $\mathbf{X}$ to \texttt{torch.float64} tensor
\State Initialize $\mathbf{K} \gets \mathbf{0}_{n \times n}$
\For{$s = 1, \dots, M$}
  \State Initialize network $f_{\boldsymbol{\theta}}$ with architecture: $p \to m \to \cdots \to m \to 1$ ($L$ layers total)
  \State Initialize weights: $\mathbf{W}^{(\ell)} \sim \mathcal{N}(0,1)$ for $\ell=1,\dots,L-1$, $\mathbf{w}^{(L)} \sim \mathcal{N}(0,1)$; biases $\mathbf{0}$
  \For{$i = 1, 1+c, 1+2c, \dots$}
    \State $I \gets \{i,\dots,\min(i+c-1,n)\}$
    \For{$j = 1, 1+c, 1+2c, \dots$}
      \State $J \gets \{j,\dots,\min(j+c-1,n)\}$
      \State Compute Jacobians $\mathbf{J}_1 \in \mathbb{R}^{|I|\times |\boldsymbol{\theta}|}$ for samples in $I$
      \State Compute Jacobians $\mathbf{J}_2 \in \mathbb{R}^{|J|\times |\boldsymbol{\theta}|}$ for samples in $J$
      \State $\mathbf{K}[I,J] \mathrel{+}= \mathbf{J}_1 \mathbf{J}_2^\top$
    \EndFor
  \EndFor
\EndFor
\State $\mathbf{K} \gets \mathbf{K}/M$
\State Symmetrize: $\mathbf{K} \gets (\mathbf{K} + \mathbf{K}^\top)/2$
\State Extract diagonal: $\mathbf{d} \gets \text{diag}(\mathbf{K})^{1/2}$ with $d_i \gets \max(d_i, 10^{-10})$
\State Normalize: $K_{ij} \gets K_{ij}/(d_i d_j)$
\State Scale: $\mathbf{K} \gets \mathbf{K}/p$
\State \Return $\mathbf{K}$
\end{algorithmic}
\end{algorithm}

\begin{algorithm}[H]
\caption{MINQUE-based LMM Training (matching \texttt{knn()})}
\begin{algorithmic}[1]
\Require Phenotypes $\mathbf{y} \in \mathbb{R}^n$, genotype blocks $\{\mathbf{X}^{(r)}\}_{r=1}^R$, kernel types, MINQUE type
\Ensure Variance components $\boldsymbol{\theta}$, fixed effects $\boldsymbol{\beta}$
\For{$r=1,\dots,R$}
  \State Compute kernel $\mathbf{K}^{(r)}$ using specified kernel type
\EndFor
\State Construct variance components $\mathcal{V} = \{\mathbf{I}_n, \mathbf{K}^{(1)}, \dots, \mathbf{K}^{(R)}\}$
\State Initialize $\mathbf{V}$ based on MINQUE type
\State Ensure $\mathbf{V}$ is positive definite using \texttt{near\_pd()}
\State Compute $\mathbf{P} \gets \mathbf{V}^{-1}$
\If{fixed effects present}
  \State Estimate $\boldsymbol{\beta} \gets (\mathbf{F}^\top\mathbf{F})^{-1}\mathbf{F}^\top\mathbf{y}$
  \State Compute residuals $\mathbf{r} \gets \mathbf{y} - \mathbf{F}\boldsymbol{\beta}$
\Else
  \State $\boldsymbol{\beta} \gets \mathbf{0}$, $\mathbf{r} \gets \mathbf{y}$
\EndIf
\State Form system: $A_{ij} \gets \text{tr}(\mathbf{P}\mathbf{V}_i\mathbf{P}\mathbf{V}_j)$, $b_i \gets \mathbf{r}^\top\mathbf{P}\mathbf{V}_i\mathbf{P}\mathbf{r}$
\State Solve $\boldsymbol{\theta} \gets \mathbf{A}^{-1}\mathbf{b}$
\If{constrained}
  \State $\theta_i \gets \max(0, \theta_i)$ for all $i$
\EndIf
\State \Return $\boldsymbol{\theta}$, $\boldsymbol{\beta}$
\end{algorithmic}
\end{algorithm}

\begin{algorithm}[H]
\caption{LMM Prediction (matching \texttt{knn\_predict()})}
\begin{algorithmic}[1]
\Require Training $\mathbf{y}_{\text{train}}$, variance components $\boldsymbol{\theta}$, kernels for train/test
\Ensure Predictions $\hat{\mathbf{y}}_{\text{test}}$
\State Construct combined kernel: $\mathbf{K}_{\text{RE}} \gets \sum_{k \geq 1} \theta_k \mathbf{V}_k$
\State Extract blocks: $\mathbf{K}_{\text{train}}$, $\mathbf{K}_{\text{test,train}}$
\State Form covariance: $\mathbf{C} \gets \mathbf{K}_{\text{train}} + \theta_0 \mathbf{I}$
\If{fixed effects present}
  \State Compute residuals: $\mathbf{r} \gets \mathbf{y}_{\text{train}} - \mathbf{F}_{\text{train}}\boldsymbol{\beta}$
\Else
  \State $\mathbf{r} \gets \mathbf{y}_{\text{train}}$
\EndIf
\State Solve: $\boldsymbol{\alpha} \gets \mathbf{C}^{-1} \mathbf{r}$
\State Predict: $\hat{\mathbf{y}}_{\text{test}} \gets \mathbf{F}_{\text{test}}\boldsymbol{\beta} + \mathbf{K}_{\text{test,train}} \boldsymbol{\alpha}$
\State \Return $\hat{\mathbf{y}}_{\text{test}}$
\end{algorithmic}
\end{algorithm}

\begin{algorithm}[H]
\caption{Kernel Ridge Regression with NTK (matching \texttt{krr\_with\_ntk()})}
\begin{algorithmic}[1]
\Require Training data $(\mathbf{X}_{\text{train}}, \mathbf{y}_{\text{train}})$, test data $\mathbf{X}_{\text{test}}$, NTK parameters $(m, L, M)$
\Ensure Predictions $\hat{\mathbf{y}}_{\text{test}}$, optimal $\lambda^*$
\State Compute NTK matrices using Algorithm 1:
\State \quad $\mathbf{K}_{\text{train}} \gets \text{NTK}(\mathbf{X}_{\text{train}}, \mathbf{X}_{\text{train}})$
\State \quad $\mathbf{K}_{\text{test,train}} \gets \text{NTK}(\mathbf{X}_{\text{test}}, \mathbf{X}_{\text{train}})$
\State \textbf{Cross-validation for $\lambda$:}
\For{each $\lambda \in \{0.001, 0.01, 0.1, 1.0, 10.0\}$}
  \For{each fold $k = 1, \dots, 5$}
    \State Split training data into train/validation for fold $k$
    \State Compute: $\boldsymbol{\alpha}_k \gets (\mathbf{K}_{\text{fold-train}} + \lambda \mathbf{I})^{-1} \mathbf{y}_{\text{fold-train}}$
    \State Predict: $\hat{\mathbf{y}}_{\text{fold-val}} \gets \mathbf{K}_{\text{fold-val,train}} \boldsymbol{\alpha}_k$
    \State Record MSE for fold $k$
  \EndFor
  \State Compute mean MSE across folds for $\lambda$
\EndFor
\State Select $\lambda^* \gets \arg\min_\lambda \text{mean MSE}(\lambda)$
\State \textbf{Final prediction:}
\State Solve using Cholesky decomposition:
\State \quad $\mathbf{L}\mathbf{L}^\top = \mathbf{K}_{\text{train}} + \lambda^* \mathbf{I}$
\State \quad Solve $\mathbf{L}\mathbf{z} = \mathbf{y}_{\text{train}}$ for $\mathbf{z}$
\State \quad Solve $\mathbf{L}^\top\boldsymbol{\alpha} = \mathbf{z}$ for $\boldsymbol{\alpha}$
\State Predict: $\hat{\mathbf{y}}_{\text{test}} \gets \mathbf{K}_{\text{test,train}} \boldsymbol{\alpha}$
\State \Return $\hat{\mathbf{y}}_{\text{test}}$, $\lambda^*$
\end{algorithmic}
\end{algorithm}

\end{document}